\spnewtheorem{Claim}{Claim}{\bfseries}{\it}
\newtheorem{observation}[theorem]{Observation}
\title{Recognizing $k$-equistable graphs in FPT time\thanks{This research is supported in part by ``Agencija za raziskovalno dejavnost Republike Slovenije'', research program P$1$--$0285$ and research projects J$1$-$5433$, J$1$-$6720$, J$1$-$6743$, and BI-FR/$15$--$16$--PROTEUS--$003$.}}
\author{Eun Jung Kim\inst{1} \and Martin Milani\v{c}\inst{2} \and Oliver Schaudt\inst{3}}
\institute{CNRS-Universit\'e Paris-Dauphine\\ 
Place du Mar\'echal de Lattre de Tassigny, 75775 Paris Cedex 16\\ \email{eun-jung.kim@dauphine.fr}  \and
University of Primorska, UP IAM and UP FAMNIT\\
Muzejski trg 2, SI-6000 Koper, Slovenia.\\ \email{martin.milanic@upr.si}
\and
Universit\"at zu K\"oln,
Institut f\"ur Informatik,
Weyertal 80, 50931 K\"oln, Germany.\\ \email{schaudto@uni-koeln.de}}
\begin{document}

\maketitle

\begin{abstract}
A graph $G = (V,E)$ is called \emph{equistable} if there exist a
positive integer $t$ and a weight function $w : V \to \mathbb{N}$ such that $S \subseteq V$ is a
maximal stable set of $G$ if and only if $w(S) = t$. Such a function $w$
is called an \emph{equistable function} of $G$. For a positive integer $k$, a graph $G = (V,E)$ is said to
be \emph{$k$-equistable} if it admits an equistable function which is bounded by $k$.

We prove that the problem of recognizing $k$-equistable graphs is fixed parameter tractable when parameterized by $k$, affirmatively answering a question of Levit et al. In fact, the problem admits an $O(k^5)$-vertex kernel that can be computed in linear time.\\

\noindent{\textbf{Keywords:}} equistable graphs,
recognition algorithm,
fixed parameter tractability.
\end{abstract}

\section{Introduction}

The main notion studied in this paper is the class of equistable graphs, introduced by Payan in 1980~\cite{MR553649} as a generalization of the well known and well studied class of threshold graphs~\cite{MR0479384,MR1417258}. A {\it stable} (or {\it independent}) {\it set} in a (\hbox{finite}, simple, undirected) graph $G$ is a set of pairwise non-adjacent vertices. A {\it maximal stable set} is a stable set not contained in any other stable set. A graph $G=(V,E)$ is said to be {\it equistable} if there exists a function $\varphi:V \to \mathbb{R}_+$ such that for every $S\subseteq V$, set $S$ is a maximal stable set of $G$ if and only if $\varphi(S):= \sum_{x\in S}\varphi(x)=1$. Equivalently, $G$ is equistable if and only if there exist a positive integer $t$ and a weight function $w:V \to \mathbb{N}:= \{1,2,3,\ldots\}$ such that $S\subseteq V$ is a maximal stable set of $G$ if and only if $w(S)=t$. Such a function $w$ is called an {\em equistable function} of $G$, while the pair $(w,t)$ is called an {\em equistable structure}. Equistable graphs were studied in a series of papers~\cite{MR3040147,MR2794315,MR2379078,MR3162288,MR2024271,MR2024275,MR2823204,MR,MR1268776,MR553649,MT}; 
besides threshold graphs and cographs, they also generalize the class of general partition graphs~\cite{McAvaney1993131,MR2080087,MR2794315}. The complexity status of recognizing equistable graphs is open, and no combinatorial characterization of equistable graphs is known.

Levit et al.~introduced in~\cite{MR3040147} the notion of $k$-equistable graphs. For a positive integer $k$, a graph $G=(V,E)$ is said to be {\em $k$-equistable} if it admits an equistable function $w: V \to [k]:=\{1,\ldots, k\}$. Such a weight function is called a {\em $k$-equistable function}, and the corresponding structure $(w,t)$ is a {\em $k$-equistable structure}. We remark that there exist equistable graphs such that the smallest $k$ for which the graph is $k$-equistable is exponential in the number of vertices of $G$~\cite{MR2823204}.

For a positive integer $t$, an equistable graph $G=(V,E)$ is said to be {\em \hbox{target-$t$} equistable} if it admits an equistable
function $w: V \to \mathbb{N}$ with equistable structure $(w,t)$.
Clearly, every target-$t$ equistable graph is also $t$-equistable (but not vice versa).

\begin{sloppypar}
As mentioned above, the complexity of recognizing equistable graphs is open, but it seems \hbox{plausible} that the problem could be NP-hard. It thus makes sense to search ways to simplify the recognition problem. To this end, we consider the following two parameterized problems related to equistability.
\end{sloppypar}

\medskip
\begin{center}
\fbox{\parbox{0.82\linewidth}{\noindent
{\sc $k$-Equistability}\\[.8ex]
\begin{tabular*}{.9\textwidth}{rl}
{\em Input:} & A graph $G=(V,E)$, a positive integer $k$.\\
{\em Parameter:} & $k$.\\
{\em Question:} & Is $G$ $k$-equistable?
\end{tabular*}
}}
\end{center}

\smallskip

\begin{center}
\fbox{\parbox{0.82\linewidth}{\noindent
{\sc Target-$t$ Equistability}\\[.8ex]
\begin{tabular*}{.9\textwidth}{rl}
{\em Input:} & A graph $G=(V,E)$, a positive integer $t$.\\
{\em Parameter:} & $t$.\\
{\em Question:} & Is $G$ target-$t$ equistable?
\end{tabular*}
}}
\end{center}

Apart from being natural parameterizations of the equistability problem, the
first problem has been tackled before (in a non-parameterized variant)
in a paper by Levit et al.~\cite{MR3040147}. There they prove the following.

\begin{theorem}[Levit et al.~\cite{MR3040147}]\label{thm:k-equistable}
For every fixed $k$, there is an $O\left(n^{2k}\right)$ algorithm
to decide whether a given $n$-vertex graph is $k$-equistable.
In case of a positive instance, the algorithm also produces a $k$-equistable structure of~$G$.
\end{theorem}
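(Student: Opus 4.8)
The plan is to fix the threshold first and then search for a compatible weight function. If $(w,t)$ is a $k$-equistable structure, then any maximal stable set $S$ satisfies $t=w(S)=\sum_{x\in S}w(x)\le k|S|\le kn$, so $t$ ranges over the $O(kn)$ integers in $\{1,\dots,kn\}$, and I would loop over all of them. For a fixed $t$ it is convenient to record the following reformulation of equistability: a pair $(w,t)$ with $w:V\to[k]$ is an equistable structure if and only if (A) every maximal stable set has weight exactly $t$, and (B) no non-stable set has weight $t$. Indeed, a stable set $S$ with $w(S)=t$ cannot be properly contained in a maximal stable set $S'$, since that would force $w(S'\setminus S)=0$ against positivity; hence under (A) and (B) the subsets of weight exactly $t$ are precisely the maximal stable sets, which is the defining condition.

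The first genuine step is to dispose of condition (B), which I claim is polynomially checkable for a given candidate $(w,t)$ precisely because the weights are bounded by $k$. A set is non-stable if and only if it contains an edge, so (B) fails if and only if there is an edge $uv$ and a set $Y\subseteq V\setminus\{u,v\}$ with $w(Y)=t-w(u)-w(v)$. For each of the $O(n^2)$ edges this is a single bounded subset-sum query: with item values in $[k]$ and target at most $kn$, the set of attainable values $w(Y)$ is computed by a standard dynamic program in time $O(kn^2)$. Thus (B) is never the bottleneck, and I would treat it as a polynomial-time oracle applied to whatever candidate weightings the search produces.

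The heart of the matter, and the step I expect to be the main obstacle, is condition (A). For a fixed candidate $w$ this is not innocuous: already for the unit weighting $w\equiv 1$ it asks whether all maximal stable sets have the same size, i.e.\ whether $G$ is well-covered, which is coNP-complete. Hence one cannot hope to enumerate arbitrary weightings and verify (A) directly; the bound $k$ must be exploited to collapse the search drastically, since the naive vertex-by-vertex guess costs $k^n$. A concrete foothold is that, under the bound, every maximal stable set $S$ obeys $t/k\le|S|\le t$, so the sizes of all maximal stable sets lie within a factor $k$ of one another. My plan is to guess only a bounded amount of information — the weighted profiles of maximal stable sets across the $k$ weight classes — and to prove a structural lemma stating that, for weights in $[k]$, the exponentially many equalities imposed by (A) are generated by a family of $n^{O(k)}$ \emph{local} constraints among maximal stable sets that differ in a bounded way.

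The crux, therefore, is exactly this reduction: showing that bounded weights collapse the coNP-hard, globally quantified condition (A) to $n^{O(k)}$ explicitly testable conditions, each checkable in polynomial time. Granting such a lemma, the algorithm enumerates the corresponding $n^{O(k)}$ candidate structures, discards those violating a local (A)-constraint, filters the survivors through the (B)-oracle of the second paragraph, and accepts if one survives; the careful accounting of how many guesses are needed is what should yield the claimed $O(n^{2k})$ bound. Producing a $k$-equistable structure in the positive case then comes for free, since any accepted guess \emph{is} such a structure. I fully expect the proof of the structural lemma — the passage from ``all maximal stable sets'' to ``$n^{O(k)}$ representative ones'' — to absorb essentially all of the difficulty.
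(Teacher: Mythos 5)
You have a genuine gap, and you locate it yourself: everything hinges on the ``structural lemma'' you grant without proof, and the lemma you envision is not the mechanism that actually makes the theorem true. The key fact, due to Levit et al.\ (Lemma~\ref{lem:same-weight} in this paper), is that for \emph{any} equistable function $w$, each weight class $w^{-1}(i)$ is contained in a single twin class of $G$ (in particular $\pi(G)\le k$). This rigidity of equistable functions has two consequences your plan lacks. First, it collapses the candidate space: permuting weights within a twin class preserves equistability, so up to such permutations a candidate $w$ is an ordered partition of a twin-class-respecting vertex ordering into $k$ intervals, giving at most $k!\,{n+k-1 \choose n}=O(n^{k-1})$ candidates for fixed $k$. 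Without this, your enumeration of ``candidate structures'' has no bound better than $k^n$, and guessing ``weighted profiles of maximal stable sets'' does not determine $w$, so the claimed $n^{O(k)}$ enumeration is unsupported. Second --- and this is where your coNP-hardness worry about condition (A) dissolves --- it collapses verification: because the at most $k$ weight classes are twin sets, whether a set $S$ is a maximal stable set, as well as its weight, depends only on the count vector $\left(|S\cap w^{-1}(1)|,\ldots,|S\cap w^{-1}(k)|\right)$. Hence conditions (A) and (B) are verified \emph{simultaneously} by scanning all at most $\prod_{i=1}^k \left(|w^{-1}(i)|+1\right)\le (n/k+1)^k$ vectors and testing, in $O(k^2)$ time per vector via the quotient graph, whether ``weight equals $t$'' holds iff the vector encodes maximal stable sets. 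Your hoped-for lemma about $n^{O(k)}$ \emph{local} constraints among maximal stable sets is never proved, would at best address (A), and is in any case not how the bound arises: well-coveredness ($w\equiv 1$) stays coNP-hard precisely because an arbitrary $w$ need not be an equistable candidate; the paper never needs to check (A) for such $w$, since any $w$ violating the twin condition is discarded outright.

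Two smaller points. The loop over $t\in\{1,\ldots,kn\}$ is unnecessary: for each candidate $w$ the target is forced to equal the $w$-weight of one fixed maximal stable set, so $t$ is computed, not guessed; and once verification goes through count vectors, your separate subset-sum oracle for (B) --- which is correct as far as it goes --- becomes redundant. As written, the proposal establishes only the (A)/(B) reformulation and the polynomial checkability of (B); the enumeration bound, the verification of (A), and the $O(n^{2k})$ accounting all rest on the unproven lemma, so the theorem is not proved.
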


Also, the authors ask whether Theorem~\ref{thm:k-equistable} can be strengthened in the sense that there is an FPT-algorithm for recognizing $k$-equistable graphs.
We answer this question affirmatively.

More precisely, we prove the following results:
\begin{itemize}
  \item There is an $O(k^5)$-vertex kernel for the {\sc $k$-Equistability} problem that can be computed in linear time.
	      This yields an FPT algorithm for the {\sc $k$-Equistability} problem of running time $O(k^{9k+1} + m + n)$, given a graph with $n$ vertices and $m$ edges. This affirmatively answers the question posed by Levit et al.~\cite{MR3040147}.
  \item The {\sc Target-$t$ Equistability} problem admits an $O(t^2)$-vertex kernel, computable in linear time.
Moreover, there is an $O(t^{3t+1} + m + n)$ time algorithm to solve the {\sc Target-$t$ Equistability} problem.
\end{itemize}
The first result we prove in Section~\ref{sec:k-equi}, and the second in Section~\ref{sec:t-equi}.

In order to achieve the above mentioned running times of our FPT algorithms, we present a refinement of the algorithm proposed
by Levit et al.~in~\cite{MR3040147} in their proof of Theorem~\ref{thm:k-equistable}. This we present in Section~\ref{sec:XP}.

\section{Preliminaries}

\subsection{Twin classes}\label{subsec:twin-classes}

Following~\cite{MR3040147}, we say that vertices $u$ and $v$ of a graph $G$ are {\em twins} if they have exactly the same set of neighbors other
than $u$ and $v$. It is easy to verify that the twin relation is an equivalence relation.
We recall some basic properties of the twin relation~(see~\cite{MR3040147}):

\begin{lemma}
Let $G = (V, E)$ be a graph. The twin relation is an equivalence relation, and every equivalence class is either a clique or a stable set.
\end{lemma}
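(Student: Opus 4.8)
The plan is to unpack the definition into the symmetric set equality $N(u)\setminus\{v\} = N(v)\setminus\{u\}$, which is valid because $G$ is loopless (so $u\notin N(u)$), and then verify the three equivalence-relation axioms followed by the structural dichotomy. Reflexivity and symmetry are immediate: the condition for $u$ with itself reads $N(u)\setminus\{u\}=N(u)\setminus\{u\}$, and the defining equality is visibly symmetric in $u$ and $v$.

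For transitivity, suppose $u,v$ are twins and $v,w$ are twins; I would show $x\in N(u)\iff x\in N(w)$ for every $x\notin\{u,w\}$ by splitting on whether $x=v$. When $x\notin\{u,v,w\}$, membership passes cleanly through $v$: the first twin relation gives $x\in N(u)\iff x\in N(v)$ and the second gives $x\in N(v)\iff x\in N(w)$, so the excluded endpoints never interfere. The only delicate case is $x=v$, where I must show that $u$ and $w$ agree on adjacency to $v$; here I would apply the twin relation of $v$ and $w$ to the vertex $u$ (legitimate since $u\neq v$ and $u\neq w$) and translate through the symmetry of adjacency to conclude $v\in N(u)\iff v\in N(w)$. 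This bookkeeping of which endpoint is excluded in each application is exactly where the argument could go wrong, and I expect it to be the main, if mild, obstacle.

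For the dichotomy, let $C$ be an equivalence class, so that by the first part its vertices are pairwise twins. If no two vertices of $C$ are adjacent, then $C$ is a stable set by definition; so assume some $u,v\in C$ are adjacent and claim that $C$ is a clique. First I would show that every $w\in C\setminus\{u\}$ is adjacent to $u$: for $w\neq v$ this follows by applying the twin relation of $v$ and $w$ to the vertex $u$, using $u\in N(v)$ and $u\neq w$ to deduce $u\in N(w)$. Then, for arbitrary distinct $w_1,w_2\in C$, I would reduce to the case $u\notin\{w_1,w_2\}$ (the other cases being covered by the previous step) and apply the twin relation of $u$ and $w_1$ to the vertex $w_2$, using $w_2\in N(u)$ and $w_2\neq w_1$ to obtain $w_2\in N(w_1)$. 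As in the transitivity argument, the essential point is that the vertices excluded in each twin condition are chosen so as not to block the desired membership transfer, and nothing beyond the definition of twins is required.
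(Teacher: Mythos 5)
The paper states this lemma without proof, recalling it from Levit et al.~\cite{MR3040147}, so there is no in-paper argument to compare against; your direct verification from the definition is the natural route. Your reformulation $N(u)\setminus\{v\}=N(v)\setminus\{u\}$ is valid in a loopless graph, reflexivity and symmetry are fine, the case $x\notin\{u,v,w\}$ of transitivity is correct, and the clique/stable-set dichotomy is argued correctly: in both of its steps the vertex to which you apply a twin relation is distinct from the two excluded endpoints, so the membership transfers go through.

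However, the delicate case $x=v$ of transitivity --- exactly the spot you flagged --- is misjustified as written. Applying the twin relation of $v$ and $w$ to the vertex $u$ gives $u\in N(v)\iff u\in N(w)$, and symmetry of adjacency turns this into $v\in N(u)\iff w\in N(u)$, i.e., $uv\in E\iff uw\in E$. That is not the required equivalence $v\in N(u)\iff v\in N(w)$, i.e., $uv\in E\iff vw\in E$, and no amount of adjacency symmetry converts one into the other: in the graph on $\{u,v,w\}$ with edge set $\{uv,uw\}$, the vertices $v$ and $w$ are twins, yet $v\in N(u)$ while $v\notin N(w)$. So the facts you cite in this case do not imply the conclusion; the hypothesis that $u$ and $v$ are twins is indispensable here, and your argument never invokes it in this case. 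The repair is one line: additionally apply the twin relation of $u$ and $v$ to the vertex $w$ (legitimate since $w\neq u$ and $w\neq v$ in the nontrivial case) to get $w\in N(u)\iff w\in N(v)$, i.e., $uw\in E\iff vw\in E$, and chain this with $uv\in E\iff uw\in E$ to conclude $uv\in E\iff vw\in E$, as desired. With that single addition the proof is complete and correct.
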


An equivalence class of the twin relation will be referred to as a {\em twin class}.
Twin classes that are cliques will be referred to {\em clique classes}, and the remaining classes will be referred to as {\em stable set classes}.
We say that two disjoint sets of vertices $X$ and $Y$ in a graph $G$ {\em see} each other if every vertex of $X$ is adjacent to every vertex of $Y$, and they {\em miss} each other if every vertex of $X$ is non-adjacent to every vertex of $Y$. A vertex $x$ {\em sees}
a set $Y\subseteq V(G)\setminus\{x\}$ if the singleton $\{x\}$ sees $Y$, and similarly $x$ {\em misses} $Y$ if $\{x\}$ misses~$Y$.
The set of all twin classes will be denoted by $\Pi(G)$ and referred to as the {\em twin partition} of $G$. The number of twin classes of $G$ will be denoted by $\pi(G) = |\Pi(G)|$. The following observation is an immediate consequence of the fact that the twin classes are equivalence classes under the twin relation.

\begin{observation}\label{j34fskf}
Every two distinct twin classes either see each other or miss each other.
\end{observation}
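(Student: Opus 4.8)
The plan is to reduce the statement to a single uniformity claim: that the adjacency relation between a vertex of one class and a vertex of the other does not depend on the choice of representatives. Once this is in hand, the two classes are forced into exactly one of the two prescribed behaviours — all cross pairs adjacent (they \emph{see} each other) or all cross pairs non-adjacent (they \emph{miss} each other). So fix two distinct twin classes $A$ and $B$. Since these are equivalence classes of the twin relation, they are disjoint, and hence every vertex of $A$ differs from every vertex of $B$; this disjointness is precisely what will let me invoke the twin condition without worrying about degenerate cases.

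First I would fix a vertex $b \in B$ and argue that its adjacency to $A$ is uniform across $A$. Take arbitrary $a, a' \in A$. As $a$ and $a'$ are twins, they have the same neighbors outside $\{a, a'\}$; and since $b \in B$ with $A \cap B = \emptyset$, we have $b \notin \{a, a'\}$, so $b$ is one of the vertices the twin condition governs. Therefore $b$ is adjacent to $a$ if and only if $b$ is adjacent to $a'$. By the symmetric argument, fixing $a \in A$ and taking $b, b' \in B$, the twin condition for $b, b'$ (applicable because $a \notin \{b, b'\}$) yields that $a$ is adjacent to $b$ if and only if $a$ is adjacent to $b'$.

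Combining the two steps gives a short transitivity chain: for arbitrary $a, a' \in A$ and $b, b' \in B$, the first step gives $a \sim b \iff a' \sim b$, and the second gives $a' \sim b \iff a' \sim b'$, whence $a \sim b \iff a' \sim b'$. Thus the truth value of adjacency is constant over all cross pairs; if that common value is ``adjacent'' then $A$ and $B$ see each other, and otherwise they miss each other, which is the claim. The argument is essentially this two-step transitivity, and the only point that needs care — the mild ``main obstacle'' — is checking that the twin condition is legitimately applicable at each invocation, since the definition of twins controls only the neighbors \emph{other than} the two vertices in question. One must verify at every step that the vertex whose adjacency is being tested lies outside the relevant twin pair, and the disjointness of distinct equivalence classes guarantees exactly this, ruling out any edge case where a tested vertex coincides with one of the twins.
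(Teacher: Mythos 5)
Your proof is correct and matches the paper's approach: the paper states this observation as an immediate consequence of twin classes being equivalence classes, and your two-step transitivity argument (uniformity of cross-adjacency in each coordinate, justified by disjointness of distinct classes so that the tested vertex never coincides with the twin pair) is precisely the spelled-out version of that one-line justification.
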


By Observation~\ref{j34fskf}, the {\em quotient graph} of $G$, denoted ${\cal Q}(G)$, is thus well defined: Its vertex set is $\Pi(G)$, and two twin classes are adjacent if and only if they see each other in $G$. Given a graph $G$, it is possible to find in linear time the twin partition $\Pi(G)$, the quotient graph ${\cal Q}(G)$ and $\pi(G)$, using any of the linear time algorithms for modular decomposition \cite{MR2500307, DBLP:conf/caap/CournierH94, MR1687819}.

The following two lemmas due to Levit et al.~\cite{MR3040147} show why twin partitions are important in the study of equistable graphs.

\begin{lemma}\label{lem:same-weight}
For every equistable function $w$ of $G$ and for every $i$, every set of the form $V^w_i= \{x\in V : w(x) = i\}$ is a subset of a twin class of $G$.
In particular, if $G$ is a $k$-equistable graph, then $\pi(G) \le k$.
\end{lemma}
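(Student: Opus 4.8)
The plan is to reduce the statement to a single claim about vertices of equal weight: if $w$ is an equistable function of $G$ with structure $(w,t)$ and $u,v$ are two vertices with $w(u)=w(v)$, then $u$ and $v$ are twins. Once this is established, each set $V^w_i$, which consists of vertices all sharing the weight $i$, lies inside one equivalence class of the twin relation, which is exactly the assertion of the first part.

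To prove this claim I would argue by contradiction, exploiting the fact that exchanging two equal-weight vertices inside a set leaves its total weight unchanged. Suppose $u$ and $v$ have equal weight but are not twins. Then some vertex $z \notin \{u,v\}$ is adjacent to exactly one of them; since the roles of $u$ and $v$ are symmetric, I may assume that $z$ is adjacent to $u$ and non-adjacent to $v$. The key move is to start not from a maximal stable set containing $u$, but from the stable set $\{v,z\}$ (which is stable because $z$ misses $v$), and to extend it to a maximal stable set $S$. Because $z \in S$ is adjacent to $u$, the vertex $u$ cannot belong to $S$. Now form $S' = (S \setminus \{v\}) \cup \{u\}$. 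Since $v \in S$, $u \notin S$, and $w(u)=w(v)$, we have $w(S') = w(S) = t$. On the other hand, $S'$ contains both $u$ and $z$, which are adjacent, so $S'$ is not even a stable set, let alone a maximal one. This contradicts the defining property of an equistable structure, namely that $w(S')=t$ forces $S'$ to be a maximal stable set. Hence $u$ and $v$ must be twins.

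For the ``in particular'' part I would finish with a counting argument. Assume $G$ is $k$-equistable and fix a $k$-equistable function $w \colon V \to [k]$. The nonempty classes $V^w_i$ with $i \in [k]$ partition $V$ into at most $k$ blocks, and by the first part each such block is contained in a single twin class. Since the twin classes also partition $V$, every twin class, being nonempty, contains at least one of these blocks; assigning to each nonempty weight class the unique twin class containing it therefore yields a surjection from a set of size at most $k$ onto $\Pi(G)$. Consequently $\pi(G) \le k$.

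I expect the only real subtlety to be the choice of the auxiliary stable set: the argument runs smoothly precisely because building $S$ around $\{v,z\}$ guarantees that the weight-preserving swap of $v$ for $u$ reintroduces the edge between $u$ and $z$ and thereby destroys stability. A more naive attempt that starts from a maximal stable set containing $u$ stumbles on the difficulty that the swapped set might still be stable, so the crux is to set up the swap so that it manufactures a forbidden edge while keeping the total weight equal to $t$.
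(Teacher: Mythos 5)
Your proof is correct: the exchange argument (take a maximal stable set $S \supseteq \{v,z\}$, note $u \notin S$ since $uz \in E$, and swap $v$ for $u$ to get a set of weight $t$ containing the edge $uz$) is exactly the standard proof of this lemma, which the paper itself states without proof, citing Levit et al.~\cite{MR3040147}. Your handling of the ``in particular'' part and your remark about why one must build $S$ around $\{v,z\}$ rather than around $u$ are both accurate, so nothing is missing.
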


\begin{corollary}\label{cor:target-t-number-of-classes}
If $G$ is a target-$t$ equistable graph, then $\pi(G) \le t$.
\end{corollary}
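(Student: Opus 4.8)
The plan is to deduce this from Lemma~\ref{lem:same-weight}, and the only work is to observe that target-$t$ equistability forces all weights to lie in $[t]$, so that the graph is in fact $t$-equistable and the ``in particular'' part of Lemma~\ref{lem:same-weight} applies directly with $k = t$.

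Concretely, I would start from an equistable function $w : V \to \mathbb{N}$ with equistable structure $(w,t)$ witnessing that $G$ is target-$t$ equistable, so that $S \subseteq V$ is a maximal stable set of $G$ if and only if $w(S) = t$. The first step is to recall that every vertex $v \in V$ is contained in at least one maximal stable set $S$: indeed, the singleton $\{v\}$ is stable and can be greedily extended to a maximal stable set. The second step is the weight bound. Since $w$ takes values in $\mathbb{N} = \{1,2,3,\ldots\}$, every vertex carries weight at least $1$; hence, fixing for a given vertex $v$ a maximal stable set $S \ni v$, we obtain
\[
  w(v) \le \sum_{x \in S} w(x) = w(S) = t .
\]
As $v$ was arbitrary, every weight lies in $\{1,\ldots,t\} = [t]$, so $w$ is in fact a $t$-equistable function and $G$ is $t$-equistable.

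The final step is to invoke Lemma~\ref{lem:same-weight} with $k = t$: a $t$-equistable graph satisfies $\pi(G) \le t$, which is exactly the claimed conclusion. I do not expect any genuine obstacle here; the corollary is essentially immediate once the weight bound $w(v) \le t$ is noted, the single idea being that a vertex weight never exceeds the weight of a maximal stable set containing it, which equals the target $t$.
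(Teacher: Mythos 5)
Your proof is correct and matches the paper's intended argument: the paper leaves the corollary without explicit proof, relying on its earlier remark that every target-$t$ equistable graph is $t$-equistable together with the ``in particular'' part of Lemma~\ref{lem:same-weight}, which is exactly what you spell out. Your verification of the weight bound $w(v)\le t$ via a maximal stable set containing $v$ is the right (and only) detail needed.
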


\begin{lemma}\label{lem:twin-same-weight}
For every equistable function $w$ of an equistable graph $G$ and for every clique class $C$ there exists an $i$ such that
$V^w_i=C$.
\end{lemma}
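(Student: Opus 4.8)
The plan is to establish two facts: first, that $w$ assigns one and the same value to all vertices of the clique class $C$; and second, that no vertex outside $C$ carries that value. Together these give $V^w_i = C$, where $i$ is the common weight of the vertices of $C$.

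For the first fact I would fix two vertices $u,v \in C$ and exploit that, $C$ being a clique class, $u$ and $v$ are \emph{adjacent} twins. The key device is the transposition $\sigma$ of $V(G)$ that swaps $u$ and $v$ and fixes every other vertex. Using the definition of twins one checks directly that $\sigma$ is an automorphism of $G$: for any $x \notin \{u,v\}$ we have $xu \in E$ if and only if $xv \in E$, while the edge $uv$ is mapped to $vu$, so all adjacencies are preserved. Hence $\sigma$ maps maximal stable sets to maximal stable sets. I would then pick any maximal stable set $S$ containing $u$ (extend $\{u\}$ to a maximal one); since $u$ and $v$ are adjacent, $v \notin S$, so $\sigma(S) = (S \setminus \{u\}) \cup \{v\}$ is again a maximal stable set. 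As $w$ is an equistable function with structure $(w,t)$, both $S$ and $\sigma(S)$ have weight $t$; since $w(\sigma(S)) = w(S) - w(u) + w(v)$, this forces $w(u) = w(v)$. Because $u,v \in C$ were arbitrary, all vertices of $C$ share a common weight $i$, i.e.\ $C \subseteq V^w_i$.

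For the second fact I would invoke Lemma~\ref{lem:same-weight}, which guarantees that $V^w_i$ is contained in a single twin class $T$. Since $C$ is itself a twin class with $\emptyset \ne C \subseteq V^w_i \subseteq T$, and since distinct twin classes are disjoint, we must have $C = T$; consequently $V^w_i \subseteq C$. Combined with $C \subseteq V^w_i$ this yields $V^w_i = C$, as claimed.

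The crux — and the only place where the clique (rather than stable-set) nature of $C$ enters — is the swap step: adjacency of $u$ and $v$ forces $v \notin S$ whenever $u \in S$, which is exactly what makes $\sigma(S)$ differ from $S$ in a single weight and lets the weights cancel. For a stable-set class the analogous transposition would typically fix every maximal stable set setwise and yield no information, consistent with the statement failing for stable-set classes. I expect the main technical point to be verifying that $\sigma$ preserves maximal stable sets (equivalently, that $\sigma$ is an automorphism), though this follows routinely from the twin property.
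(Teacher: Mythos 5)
Your proof is correct. Note, however, that there is no in-paper proof to compare against: the paper states this lemma as a known result, citing Levit et al.~\cite{MR3040147}, and gives no argument of its own. Your two-step argument is the standard one for this fact: swapping the adjacent twins $u,v$ across a maximal stable set $S$ containing $u$ (adjacency is what forces $v \notin S$, so $\sigma(S)=(S\setminus\{u\})\cup\{v\}$ genuinely differs from $S$ and the equistability condition forces $w(u)=w(v)$), and then upgrading the resulting inclusion $C \subseteq V^w_i$ to equality via Lemma~\ref{lem:same-weight} together with the disjointness of distinct twin classes. Both steps check out --- the transposition is an automorphism precisely by the definition of twins, only the ``maximal stable set $\Rightarrow$ weight $t$'' direction of equistability is needed, and your closing remark correctly isolates why the argument fails for stable set classes --- so there is nothing to flag.
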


\subsection{Parameterized complexity}

\begin{sloppypar}
A decision problem parameterized by a problem-specific parameter~$k$ is called \emph{fixed-parameter tractable} if there exists an algorithm that solves it in time \hbox{$f(k) \cdot n^{O(1)}$}, where~$n$ is the instance size.
The function~$f$ is typically super-polynomial and depends only on~$k$.
One of the main tools to design such algorithms is the kernelization technique.
A \emph{kernelization} is a polynomial-time algorithm which transforms an instance~$(I,k)$ of a parameterized problem into an equivalent instance~$(I',k')$ of the same problem such that the size of~$I'$ is bounded by~$g(k)$ for some computable function~$g$ and $k'$ is bounded by a function of $k$.
The instance~$I'$ is said to be a \emph{kernel} of size~$g(k)$.
It is a folklore that a parameterized problem is fixed-parameter tractable if and only if it admits a kernelization. In the remainder of this paper, the kernel size is expressed in terms of the number of vertices.
For more background on parameterized complexity the reader is referred to
Downey and Fellows~\cite{DF13}.
\end{sloppypar}

\section{A refined XP-algorithm for {\sc $k$-Equistability}}\label{sec:XP}

In this section we propose a revised version of the algorithm of Levit et al.~\cite{MR3040147} for checking whether a given graph is $k$-equistable.
We implement some speed-ups and give a more careful analysis of the running time. Let us remark that this improvement does not speed up the running time when $k$ is fixed, and it is thus not relevant for the main result of Levit et al.~\cite{MR3040147}.
However, the improved running time is essential when the algorithm is applied to a kernelized instance, for the
{\sc $k$-Equistability} resp.~{\sc Target-$t$ Equistability} problem.
We refrain from formally restating the whole algorithm from~\cite{MR3040147} in order not to create redundancy.

\begin{theorem}\label{thm:refined-XP}
Let $G$ be a graph on $n$ vertices and $m$ edges, and let $k \in \mathbb N$.
Then there is an algorithm of running time
$O(n+m+\max\{n^{2k}k^{1-k},k^{3k+1}\})$
to check whether $G$ is $k$-equistable.
This algorithm computes a $k$-equistable structure, if one exists, and the same holds if a target $t$ is prescribed.
\end{theorem}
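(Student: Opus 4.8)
The plan is to refine the algorithm behind Theorem~\ref{thm:k-equistable} by working entirely on the quotient graph ${\cal Q}(G)$ rather than on $G$ itself. First I would compute the twin partition $\Pi(G)$, the quotient ${\cal Q}(G)$ and $\pi(G)$ in linear time, which accounts for the $O(n+m)$ term; if $\pi(G)>k$ I reject immediately by Lemma~\ref{lem:same-weight}. From now on the number of twin classes is $p=\pi(G)\le k$, so ${\cal Q}(G)$ has at most $k$ vertices, and the whole search takes place on an object of size $O(k)$.

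The next step is to collapse the search for a $k$-equistable function to a bounded collection of linear constraints over at most $p$ unknowns. By Lemma~\ref{lem:twin-same-weight} every clique class $C$ is monochromatic, so it contributes a single weight $x_C\in[k]$. For a stable set class $D$ one checks, using Observation~\ref{j34fskf} and the definition of twins, that a maximal stable set contains either all of $D$ or none of $D$; hence only the total weight $x_D=w(D)$ of such a class is relevant for the weights of maximal stable sets, and $x_D$ is realizable by weights in $[k]$ precisely when $|D|\le x_D\le k|D|$. Consequently the weight of any maximal stable set $M$ depends only on its \emph{trace} on $\Pi(G)$, which I expect to show is exactly a maximal independent set of ${\cal Q}(G)$. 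The direction ``every maximal stable set has weight $t$'' then becomes the requirement that $\sum_{C\in I}x_C=t$ for every maximal independent set $I$ of ${\cal Q}(G)$, a system in the $\le k$ variables $x_C$ and the target $t$.

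The heart of the proof is the reverse direction: no subset of $V$ that fails to be a maximal stable set may have weight $t$. Non-maximal stable sets are automatically lighter than $t$ because the weights are positive, so I only have to forbid weight $t$ for sets containing an edge. I would argue that it suffices to rule out weight $t$ for a bounded number of \emph{types} of edge-containing sets, obtained from an independent set of ${\cal Q}(G)$ by either doubling a clique class or adding one adjacent class, together with a record of how many vertices are taken from each stable class; choosing the within-class weights as balanced as possible then pins down the relevant partial sums up to a controlled error, so that each type yields one linear inequality in the $x_C$'s and $t$. The algorithm enumerates candidate assignments of the $x_C$'s within their admissible ranges and verifies the equalities from the forward direction together with these inequalities on ${\cal Q}(G)$; a consistent assignment is turned into an actual weight function by distributing $x_D$ over $D$, which also yields the promised witness. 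Prescribing $t$ only fixes one variable, so the target-$t$ variant is handled identically.

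For the running time, the enumeration ranges over the at most $k$ twin classes: each clique class contributes a factor $k$ by Lemma~\ref{lem:twin-same-weight}, in place of the factor $\Theta(n)$ used in the proof of Theorem~\ref{thm:k-equistable}, and for each stable class the relevant quantity runs over an interval of length at most $(k-1)|D|$. A careful application of the AM--GM inequality under $\sum_D|D|\le n$, together with separating the regime in which $n$ dominates from the regime in which $k$ dominates, is expected to yield the two terms $n^{2k}k^{1-k}$ and $k^{3k+1}$, since all verification is carried out on ${\cal Q}(G)$ and contributes only lower-order factors. I expect the main obstacle to be exactly this reverse direction: proving that ``no edge-containing set has weight $t$'' is equivalent to a family of linear constraints whose size is bounded in terms of $k$ and $n$ alone, and then doing the bookkeeping tightly enough to shave the factor $k^{k-1}$ off the bound of Theorem~\ref{thm:k-equistable}.
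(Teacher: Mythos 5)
Your setup matches the paper's (linear-time twin partition, rejection when $\pi(G)>k$, clique classes monochromatic by Lemma~\ref{lem:twin-same-weight}), but the central compression step has a genuine gap. You keep, for each stable set class $D$, only the total $x_D=w(D)$, and this is sound only for the forward direction (all maximal stable sets weigh $t$), since maximal stable sets contain each stable class entirely or not at all. For the reverse direction the \emph{multiset} of weights inside each stable class matters: a class of size $2$ weighted $\{1,3\}$ versus $\{2,2\}$ has the same total but different achievable partial sums, so whether some edge-containing set hits weight exactly $t$ genuinely depends on the within-class distribution, not on $x_D$. Your proposed repair does not close this: the forbidden condition is an \emph{exact equality} ($w(S)=t$ for some non-maximal $S$), hence a disequality constraint per profile with no monotonicity, so one cannot restrict attention to ``minimal'' bad types such as an independent set of ${\cal Q}(G)$ with a doubled clique class or one added adjacent class --- an arbitrary set with edges, taking arbitrary numbers of vertices from every class, must also avoid weight $t$. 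Likewise, choosing within-class weights ``as balanced as possible'' and arguing ``up to a controlled error'' cannot certify that no subset sums \emph{exactly} to $t$; there is no proof that balanced distributions are without loss of generality, and in fact they are not interchangeable with unbalanced ones for this purpose.

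The paper does not attempt any such compression. It enumerates \emph{all} candidate weight functions up to permutation within twin classes, realized as interval partitions of an ordering in which twin classes are consecutive; with $\hat n=\max\{n,k^2\}$ this family ${\cal F}$ has size $O(\hat n^k/k)$ and is computable in time $O((k\hat n)^k)$. Then, for each $w\in{\cal F}$, it computes $t$ from one maximal stable set and enumerates all profile vectors $x$ with $0\le x_i\le|w^{-1}(i)|$ over the \emph{weight classes} $w^{-1}(i)$ (finer than twin classes), checking in $O(k^2)$ per vector, via ${\cal Q}(G)$, that $\sum_{i=1}^k ix_i=t$ holds iff $x$ encodes maximal stable sets; Lemma~\ref{lem:technicalbound} bounds the number of vectors by $(\hat n/k+1)^k$, and the product of the two enumerations yields the $\hat n^{2k}k^{1-k}$ term. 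A red flag you could have caught yourself: your accounting gives roughly $n^k\cdot\mathrm{poly}(k)$ total time, strictly better than the theorem's $n^{2k}k^{1-k}$ --- the missing second factor of order $\hat n^k$ is exactly the cost of verifying the reverse direction, which your proposal leaves unestablished rather than eliminated.
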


We emphasize that unlike in the statement of Theorem~\ref{thm:k-equistable}, the constant hidden in the $O$-notation
in Theorem~\ref{thm:refined-XP} does not depend on $k$ (in Theorem~\ref{thm:refined-XP}, $k$ is not restricted to be a constant).

Before we prove Theorem~\ref{thm:refined-XP}, we state the following observation.

\begin{lemma}\label{lem:technicalbound}
Let $k,n \in \mathbb N$ and let $a \in \mathbb N_0^k$ with $\sum_{i=1}^k a_i = n$.
Then $$\prod_{i=1}^k (a_i+1) \le (n/k+1)^k\,.$$
\end{lemma}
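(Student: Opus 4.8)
The plan is to reduce the inequality to a single application of the arithmetic-geometric mean (AM-GM) inequality. The product on the left-hand side involves the terms $a_i + 1$, so the natural first step is to introduce the substitution $b_i := a_i + 1$ for each $i \in [k]$. Since each $a_i$ is a nonnegative integer, each $b_i$ is a positive integer (in particular $b_i \ge 1 > 0$), which is exactly the positivity hypothesis required to invoke AM-GM. Under this substitution the constraint $\sum_{i=1}^k a_i = n$ becomes $\sum_{i=1}^k b_i = n + k$.

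Next I would apply AM-GM to the positive reals $b_1, \dots, b_k$, which states that the geometric mean is at most the arithmetic mean:
$$\left(\prod_{i=1}^k b_i\right)^{1/k} \le \frac{1}{k}\sum_{i=1}^k b_i = \frac{n+k}{k} = \frac{n}{k} + 1\,.$$
Raising both sides to the $k$-th power (a monotone operation on nonnegative reals) and undoing the substitution then yields
$$\prod_{i=1}^k (a_i + 1) = \prod_{i=1}^k b_i \le \left(\frac{n}{k} + 1\right)^k\,,$$
which is precisely the claimed bound.

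There is no real obstacle here: the statement is exactly the assertion that, among all ways of writing $n+k$ as a sum of $k$ positive quantities, the product is maximized when the summands are equal, and this is the content of AM-GM. The only point worth noting is that the hypothesis $a_i \in \mathbb{N}_0$ guarantees $b_i > 0$, so AM-GM is applicable without any degenerate cases; the integrality of the $a_i$ is not otherwise needed, and indeed the inequality holds for arbitrary nonnegative reals $a_i$.
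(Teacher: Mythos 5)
Your proof is correct. You take a genuinely different route from the paper: you substitute $b_i := a_i+1$, observe that $\sum_{i=1}^k b_i = n+k$, and apply the AM--GM inequality once to get $\prod_i b_i \le \left(\frac{n+k}{k}\right)^k = \left(\frac{n}{k}+1\right)^k$. The paper instead proceeds by induction on $k$: it applies the inductive hypothesis to the first $k-1$ coordinates (with sum $n-a_k$), obtaining the bound $(a_k+1)\cdot\left(\frac{n-a_k}{k-1}+1\right)^{k-1}$, and then maximizes this single-variable expression over $a_k \ge 0$, finding the maximum at $a_k = n/k$, which yields $\left(\frac{n}{k}+1\right)^k$. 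The two arguments are closely related in spirit --- the paper's induction-plus-optimization essentially reconstructs, by hand, the special case of AM--GM it needs --- but yours is shorter and more transparent, citing a standard named inequality rather than an ad hoc calculation, and it makes explicit that integrality of the $a_i$ is irrelevant (the bound holds for all nonnegative reals). The paper's version has the mild virtue of being self-contained, relying only on elementary one-variable optimization, at the cost of a ``straightforward calculation'' it does not spell out.
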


\begin{sloppypar}
\begin{proof}
If $k=1$, the statement is immediate.
So, let $k > 1$, and assume the statement is true for $k-1$.
Let $a \in \mathbb N_0^k$ with $\sum_{i=1}^k a_i = n$.
We know that $\prod_{i=1}^{k-1} (a_i+1) \le  ((n-a_k)/(k-1)+1)^{k-1}$,
and thus \hbox{$\prod_{i=1}^{k} (a_i+1) \le{}$} \hbox{$(a_k+1) \cdot ((n-a_k)/(k-1)+1)^{k-1}$}.
A straightforward calculation shows that the right hand side is maximized (over $a_k\ge 0$)
for $a_k = n/k$.
Thus,
\[
\prod_{i=1}^{k} (a_i+1) \le \left(\frac{n}{k}+1\right) \cdot \left(\frac{n-\frac{n}{k}}{k-1}+1\right)^{k-1}=\left(\frac{n}{k}+1\right)^k,
\]
which completes the proof.
\qed \end{proof}
\end    {sloppypar}

We can now prove Theorem~\ref{thm:refined-XP}.

\begin{proof}[of Theorem~\ref{thm:refined-XP}]
%We may assume that $k \ge 3$, since the case $k \le 2$ allows for a very simple solution (cf.~Levit et al.~\cite{MR3040147}) which can be obtained in $O(n+m)$ time.
Recall that by Lemma~\ref{lem:same-weight}, any equistable weight function for $G$ assigns the same weight only to vertices of the same twin class. Following the algorithm of Levit et al.~\cite{MR3040147}, we proceed as follows.
First, we compute in time $O(n+m)$ the twin partition of $G$
and the quotient graph ${\cal Q}(G)$ (cf.~Section~\ref{subsec:twin-classes}).
Fix any ordering $V(G) = \{v_1,\ldots,v_n\}$ such that vertices in each twin class appear consecutively in this ordering. Clearly, the permutation of the weights within a twin class produces an equivalent weight function, i.e., a weight function is an equistable function of $G$ if and only if after any permutation of the weights within a twin class we still have an equistable function. We aim to produce a family $\mathcal{F}$ which contains all equistable functions up to permutations of the weights within a twin class. It suffices to produce all mappings $w : V(G) \rightarrow [k]$ such that the vertices in the set $w^{-1}(i)$, $i \in [k]$, appear consecutively in the ordering of $V(G)$. Let $K(n,k)$ be the number of partitions of $[n]$ into $k$ labeled intervals, where some of the intervals may be empty. It is straightforward to verify that $|\mathcal{F}|$ is bounded by $K(n,k)$.
A standard counting argument yields $K(n,k) \le k! \cdot {n+k-1 \choose n}$.

The set ${\cal F}$ can be computed in time $O(k^k(n+k-1)^{k-1})$ as follows.
Generate all one-to-one mappings from the set $[k-1]$ to an $(n+k-1)$-element set. Using the above ordering of $V(G)$,
each such mapping determines a partition of $V(G)$. If the partition refines the twin partition of $G$, then compute all the $O(k^k)$ one-to-one mappings from the resulting set of (at most $k$) non-empty intervals to the set $[k]$. Each of these mappings specifies, in a natural way, a function in ${\cal F}$.

Let us now estimate more carefully the size of ${\cal F}$. Let $\hat{n} : = \max\{n,k^2\}$. We have
\[\frac{k \cdot (\hat{n}+k)^{k-1}}{\hat{n}^k}=\frac{k}{\hat{n}} \cdot \left(1+\frac{k}{\hat{n}}\right)^{k-1} \le \frac{1}{k} \cdot \left(1+\frac{1}{k}\right)^{k-1} \le \frac{e}{k}\,,\]
implying $k \cdot (\hat{n}+k)^{k-1} \le  e \hat{n}^k/k$.
We thus obtain
\[K(n,k) \le K(\hat{n},k) \le k! \cdot {\hat{n}+k-1 \choose \hat n} = k \cdot \frac{(\hat{n}+k-1)!}{\hat{n}!} < k \cdot (\hat{n}+k)^{k-1} \le  \frac{e \hat{n}^k}{k}.\]
Thus, we have to consider only $|\mathcal{F}|=O(\hat{n}^k/k)$ many weight functions, which
can be computed in time
$O(k^k(n+k-1)^{k-1}) = O((k\hat n)^k)$.

It remains to check if any of these $O(\hat{n}^k/k)$ weight functions in $\mathcal{F}$ is an equistable function.
For every weight function $w\in {\cal F}$, the algorithm from~\cite{MR3040147} first computes the target value $t$
by evaluating the $w$-weight of an arbitrary (fixed) maximal stable set of $G$ (see~\cite{MR3040147} for details);
in our setting, this computation can be implemented in time $O(k)$.
The algorithm then computes the set $X_w$ of all $k$-dimensional vectors $x$ with integer coordinates such that
$0\le x_i\le |w^{-1}(i)|$ for all $i\in [k]$. A vector $x\in X_w$ represents the set of all subsets of $V(G)$ such that
the number of vertices of $w$-weight $i$ in the set equals $x_i$.

Note that the number of vectors in $X_w$ is bounded by $\prod_{i=1}^k (|w^{-1}(i)|+1)$, which, by Lemma~\ref{lem:technicalbound},
is in turn bounded by $(\hat{n}/k+1)^k=O((\hat{n}/k)^k)$, for each function $w$. For each vector $x\in X_w$, the algorithm then checks whether the corresponding sets are of the right weight, that is, whether $\sum_{i = 1}^k ix_i = t$ if  and only if the vector encodes a set of maximal stable sets. This latter condition can be verified in time $O(k^2)$ using the quotient graph ${\cal Q}(G)$ (see~\cite{MR3040147} for details).

\begin{sloppypar}
The running time of this algorithm is thus
\[O \left( n + m + (k\hat n)^k + \frac{\hat{n}^k}{k}\left(k+\left(\frac{\hat{n}}{k}\right)^kk^2\right)\right)\,.\]
This expression simplifies to $O(n+m+\hat{n}^{2k}{k^{1-k}}) =
O(n+m+\max\{n^{2k}k^{1-k},k^{3k+1}\})$, as desired.
\end{sloppypar}

We remark that, in case of a prescribed target value $t$, the above algorithm can be modified in an obvious way to accept only those equistable functions under which all maximal stable sets have total weight $t$. This completes the proof.
\qed \end{proof}

\section{An $O(t^2)$-vertex kernel for the {\sc Target-$t$ Equistability} problem}\label{sec:t-equi}

Given a graph $G$, the following reduction rule is specified by a positive integer $r$ as a parameter.

\begin{itemize}
  \item[] {\bf $r$-Clique Reduction}. If a clique class $C$ contains more than $r$ vertices, delete from $C$ all but $r$ vertices.
\end{itemize}
The following lemma shows why $r$-Clique Reduction rule is safe for both problems, {\sc Target-$t$ Equistability} and {\sc $k$-Equistability}.

%\begin{lemma}\label{lem:clique-reduction}
%Clique Reduction rule is safe.
%\end{lemma}
\begin{lemma}\label{lem:clique-reduction}
Let $G$ be a graph, $T \subseteq \mathbb N$ a finite set, $C$ a clique class of $G$ with $|C|>r$ where $r := \max T$, and $k$ a positive integer.
Then, for every $t\in T$, graph $G$ is target-$t$ $k$-equistable if and only if $G'$ is target-$t$ $k$-equistable, where $G'$ is a graph obtained after the $r$-Clique Reduction rule has been applied to $G$ with respect to the clique class $C$.
\end{lemma}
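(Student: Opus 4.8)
The plan is to prove the two implications by moving equistable structures between $G$ and $G'$, and the whole argument rests on a purely combinatorial correspondence between maximal stable sets. Write $C=\{c_1,\dots,c_p\}$ with $p>r$, put $C'=\{c_1,\dots,c_r\}$, let $G'=G-\{c_{r+1},\dots,c_p\}$, and let $N$ be the common neighbourhood of the vertices of $C$ outside $C$; since the $c_i$ are twins, $N$ is well defined and unchanged by the deletion. The backbone is the claim that \emph{for every $S\subseteq V(G')$, $S$ is a maximal stable set of $G'$ if and only if it is a maximal stable set of $G$.} One direction is immediate because $G'$ is an induced subgraph. For the other direction, the only way maximality in $G'$ could fail in $G$ is by adding some deleted vertex $c_j$ ($j>r$); but then $S$ would miss both $N$ and $C$, and hence $S$ could already be extended in $G'$ by any vertex of $C'$, contradicting maximality in $G'$. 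Here I use that $C'\neq\emptyset$, i.e.\ $r\ge 1$, which holds because $r=\max T$ and $T\subseteq\mathbb N$. Given this claim, the forward implication is immediate: if $(w,t)$ is a target-$t$ $k$-equistable structure of $G$, then $w':=w|_{V(G')}$ still maps into $[k]$, and for $S\subseteq V(G')$ the equivalences ``$S$ maximal stable in $G'$'' $\Leftrightarrow$ ``$S$ maximal stable in $G$'' $\Leftrightarrow$ $w(S)=t$ $\Leftrightarrow$ $w'(S)=t$ show that $(w',t)$ is a target-$t$ $k$-equistable structure of $G'$.

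For the backward implication, suppose $(w',t)$ is a target-$t$ $k$-equistable structure of $G'$. The vertices $c_1,\dots,c_r$ are mutual twins forming a clique, hence lie in a single clique class of $G'$, so Lemma~\ref{lem:twin-same-weight} forces them to share one weight $i\in[k]$. I extend $w'$ to $w:V(G)\to[k]$ by setting $w(c_j):=i$ for all $j>r$. Showing that every maximal stable set $S$ of $G$ satisfies $w(S)=t$ is the easy half: $S$ meets the clique $C$ in at most one vertex, and after replacing that vertex by $c_1$ if it happens to be a deleted one, one obtains a weight-preserving twin of $S$ lying in $V(G')$, to which the claim and the equistability of $w'$ apply.

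The genuine obstacle is the converse: assuming only $w(S)=t$, I must deduce that $S$ is a maximal stable set of $G$ \emph{without} assuming $S$ stable, and a priori $S$ may contain many vertices of $C$, all of weight $i$. This is exactly where $r=\max T$ is used. Because each vertex of $C$ has weight $i\ge 1$ and all weights are positive, the quantity $q:=|S\cap C|$ obeys $q\le w(S)=t\le r=|C'|$, so I can substitute the $q$ vertices of $S\cap C$ by $q$ distinct vertices of $C'$ to produce $S^{\ast}\subseteq V(G')$ with $w(S^{\ast})=w(S)=t$. By equistability of $w'$, $S^{\ast}$ is a maximal stable set of $G'$, in particular stable; since $S^{\ast}$ contains the $q$ pairwise adjacent substitutes, this forces $q\le 1$. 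Once $|S\cap C|\le 1$ is known, a final twin-swap together with the claim yields that $S$ is a maximal stable set of $G$. The inequality $q\le r$ is the crux, and it is precisely the reason the reduction deletes down to exactly $\max T$ vertices rather than fewer.
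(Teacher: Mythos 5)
Your proof is correct and follows essentially the same route as the paper's: restriction of the weight function for the forward direction, and for the converse an extension of $w'$ to $C$ by the common weight $i$ (via Lemma~\ref{lem:twin-same-weight}), using positivity of weights to bound $|S\cap C|\le t\le r$ and twin-swaps into $C\cap V(G')$ to invoke the equistability of $(w',t)$. The only difference is presentational: you make explicit, as a standalone claim, the correspondence between maximal stable sets of $G$ and $G'$ (and the resulting deduction $|S\cap C|\le 1$), which the paper's proof treats as immediate or absorbs into a ``w.l.o.g.''\ step.
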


\begin{proof}
%Let $G'$ be the graph obtained from $G$ by applying $r$-Clique Reduction rule with respect to the clique class $C$.
%We will show that $G$ is target-$t$ $k$-equistable if and only if $G'$ is target-$t$ $k$-equistable, for any $t \in T$.
Let $t\in T$. First assume that $G$ is target-$t$ $k$-equistable, say with a $k$-equistable structure $(w,t)$.
It is immediate that the restriction $w'$ of $w$ to $V(G')$ yields a $k$-equistable structure $(w',t)$ of $G'$.
Therefore $G'$ is target-$t$ $k$-equistable.

Now assume that $G'$ is target-$t$ $k$-equistable, with a $k$-equistable structure $(w',t)$.
We define a function $w : V(G) \to \{1,\ldots,k\}$ by extending $w'$ to the set $V(G)$.
Indeed, we simply put $w(u):=w'(u)$ for all $u \in V(G')$, and $w(u) := w(v)$ for all $u \in C \setminus V(G')$ where $v \in C \cap V(G')$.
The choice of $v \in C \cap V(G')$ is arbitrary, since $w'$ is constant on $C \cap V(G')$ by Lemma~\ref{lem:twin-same-weight}.

We claim that $(w,t)$ is an equistable structure of $G$.
To show this, pick an arbitrary maximal stable set $X$ of $G$.
Then $|X \cap C| \le 1$, and so we may assume that $X \subseteq V(G')$.
Clearly $X$ is a maximal stable set of $G'$, and so $w'(X) = t$.
Therefore $w(X)=t$.

Conversely, let $X \subseteq V(G)$ be a set with $w(X)=t$.
Since $w(X \cap C) \le w(X) = t$, we have $|X \cap C| \le t$.
As $w$ is constant on $C$ and $|C| > \max T \ge t$, we may w.l.o.g.~assume that $X \subseteq V(G')$.
Hence, $w'(X)=w(X)=t$, and so $X$ is a maximal stable set of $G'$.
Thus, $X$ is a maximal stable set of $G$ which completes the proof.
\qed \end{proof}

In particular, $r$-Clique Reduction rule is safe for the {\sc Target-$t$ Equistability} problem.
This is seen by putting $k:=t$ and $T=\{t\}$ in the statement of Lemma~\ref{lem:clique-reduction}.

\begin{theorem}\label{thm:t-equistable}
The {\sc Target-$t$ Equistability} problem admits a kernel of at most $t^2$ vertices, computable in linear time.
Moreover, there is an $O(t^{3t+1} + m + n)$ time algorithm to solve the {\sc Target-$t$ Equistability} problem,
given a graph with $n$ vertices and $m$ edges.
\end{theorem}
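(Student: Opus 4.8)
The plan is to establish the kernel bound first and then combine it with the refined XP-algorithm of Theorem~\ref{thm:refined-XP} to obtain the running time. For the kernel, the key structural fact is Corollary~\ref{cor:target-t-number-of-classes}: if $G$ is target-$t$ equistable then the number of twin classes satisfies $\pi(G) \le t$. This immediately bounds the number of twin classes, so the only way a target-$t$ equistable graph can have many vertices is by having large twin classes.

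First I would apply the $r$-Clique Reduction rule with $r := t$ (equivalently, setting $T = \{t\}$, so $\max T = t$). By Lemma~\ref{lem:clique-reduction}, this rule is safe for the {\sc Target-$t$ Equistability} problem, so after exhaustively applying it, every clique class has at most $t$ vertices. The subtle point, and the one place that needs care, is the stable set classes: the reduction rule as stated only shrinks clique classes, so I must argue separately that stable set classes are already small in any target-$t$ equistable instance. Here I would invoke the fact that any maximal stable set intersects a stable set class in the whole class (a stable set class is, by definition, an independent set all of whose vertices have identical neighborhoods outside the class, so either all or none of its vertices lie in a given maximal stable set). Consequently the total weight of an entire stable set class is at most $t$, and since each vertex has weight at least $1$, a stable set class can contain at most $t$ vertices. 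Thus both clique classes and stable set classes have size at most $t$ after reduction.

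Combining these bounds gives the kernel size: there are at most $\pi(G) \le t$ twin classes, each of size at most $t$, so the reduced graph has at most $t \cdot t = t^2$ vertices. If at any point the reduction produces more than $t$ twin classes, or a class exceeds the size bound that cannot be reduced, we may safely reject. The reduction itself runs in linear time because the twin partition and quotient graph are computable in linear time (cf.~Section~\ref{subsec:twin-classes}), and deleting surplus vertices from oversized clique classes is a linear-time scan.

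For the running time of the full algorithm, I would run the kernelization in time $O(n+m)$ to obtain an equivalent instance $G'$ on $n' \le t^2$ vertices, and then apply the refined XP-algorithm of Theorem~\ref{thm:refined-XP} to $G'$ with the prescribed target $t$ and $k := t$. Substituting $n' = t^2$ into the bound $O(n'+m'+\max\{(n')^{2k}k^{1-k}, k^{3k+1}\})$ from Theorem~\ref{thm:refined-XP}, the term $(n')^{2k}k^{1-k} = (t^2)^{2t} t^{1-t} = t^{4t} t^{1-t} = t^{3t+1}$, which matches the $k^{3k+1}$ term exactly, so the maximum is $t^{3t+1}$. Adding back the $O(n+m)$ cost of the initial kernelization on the original graph yields the claimed total running time of $O(t^{3t+1} + m + n)$. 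The main obstacle I anticipate is the bookkeeping around stable set classes, since the explicitly stated reduction rule handles only clique classes; the correctness hinges on verifying that stable set classes are automatically bounded in any \emph{yes}-instance and on rejecting instances where this fails.
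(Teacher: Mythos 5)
Your proposal is correct and follows essentially the same route as the paper: bound $\pi(G)\le t$ via Corollary~\ref{cor:target-t-number-of-classes}, reject oversized stable set classes, apply the $r$-Clique Reduction with $r=t$ (safe by Lemma~\ref{lem:clique-reduction} with $T=\{t\}$, $k:=t$), and run the refined XP-algorithm of Theorem~\ref{thm:refined-XP} with $k:=t$ on the resulting $t^2$-vertex kernel. The only difference is cosmetic: you spell out the justification that a stable set class lies entirely inside some maximal stable set (hence has weight, and thus size, at most $t$), a step the paper asserts without elaboration.
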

\begin{proof}
Let $G$ be a graph on $n$ vertices and $m$ edges.
Using one of the linear time algorithms for modular decomposition~\cite{MR2500307,DBLP:conf/caap/CournierH94,MR1687819}, we can compute $\Pi(G)$ and $\pi(G)$ in linear time.
If $\pi(G)>t$, then we conclude that $G$ is not target-$t$ equistable, by Corollary~\ref{cor:target-t-number-of-classes}.
Similarly, if there exists a stable set class $S$ with $|S|>t$, then we conclude that $G$ is not target-$t$ equistable.
Also, we can apply $r$-Clique Reduction rule with parameter $t$, to every clique class, in linear time.
Afterward, the graph has at most $t^2$ vertices, which proves the first statement of the theorem.

Our FPT algorithm works as follows.
First we compute in time $O(m+n)$ a kernel $G'$ with $n' \le t^2$ many vertices.
Then we apply Theorem~\ref{thm:refined-XP} to check whether $G'$ is target-$t$ equistable.
For this, we can put $k:=t$ and decide whether $G'$ is $k$-equistable with target $t$.
We thus obtain a running time of
\hbox{$O(|V(G')|+|E(G'|+\max\{n'^{2k}k^{1-k},k^{3k+1}\})$}$=O(t^{3t+1})$.
\qed \end{proof}

\section{An $O(k^5)$-vertex kernel for the {\sc $k$-Equistability} problem}\label{sec:k-equi}

This section is devoted to the proof of the following result.

\begin{theorem}\label{thm:k-equistable-fpt}
The {\sc $k$-Equistability} problem admits an $O(k^5)$-vertex kernel, computable in linear time.
Moreover, there is an $O(k^{9k+1} + m + n)$ time algorithm to solve the {\sc $k$-Equistability} problem, given a graph with $n$ vertices and $m$ edges.
\end{theorem}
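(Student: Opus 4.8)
The plan is to reduce $G$ in linear time to an equivalent instance on $O(k^5)$ vertices and then decide $k$-equistability of this kernel with the refined XP-algorithm of Theorem~\ref{thm:refined-XP}. The preprocessing mirrors the target-$t$ case: using modular decomposition we compute the twin partition $\Pi(G)$, and by Lemma~\ref{lem:same-weight} we may reject at once if $\pi(G)>k$. Hence we may assume there are at most $k$ twin classes, so it suffices to bound the size of each class by $O(k^4)$. The clique classes will again be truncated by the $r$-Clique Reduction of Lemma~\ref{lem:clique-reduction}; the crux, and the place where the two problems genuinely differ, is the treatment of the stable set classes.

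The structural tool I would set up first is an \emph{all-or-nothing} property of stable set classes: for every stable set class $S_0$, with common external neighbourhood $N(S_0)$, and every maximal stable set $S$, either $S_0\subseteq S$ or $S_0\cap S=\emptyset$. Indeed, if $S$ meets $N(S_0)$ then it is disjoint from $S_0$, and otherwise every vertex of $S_0$ is non-adjacent to all of $S$, so maximality forces $S_0\subseteq S$. Two consequences drive the whole argument. First, a clique class contributes at most one vertex, of weight at most $k$, to any maximal stable set, so by Lemma~\ref{lem:twin-same-weight} the clique classes together contribute at most $k\cdot k=k^2$ to the target; thus, once the stable classes are bounded in size, the target is automatically bounded (if every stable class has $O(k^2)$ vertices then every maximal stable set has weight $k\cdot(k\cdot O(k^2))+k^2=O(k^4)$, allowing the $r$-Clique Reduction to be applied with $r=O(k^4)$). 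Second, the backward half of the equistability condition is automatically safe for stable sets: if $\emptyset\ne A\subsetneq S_0$ and $R$ is a stable set with $R\cup A$ stable, then $R$ misses $N(S_0)$, so $R\cup S_0$ is stable and hence $w(R)\le t-w(S_0)$, giving $w(A)+w(R)\le (w(S_0)-1)+(t-w(S_0))=t-1<t$; so no stable set of weight $t$ can use a proper nonempty part of $S_0$.

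The main obstacle is to reduce the stable classes, where the clique argument of Lemma~\ref{lem:clique-reduction} breaks down. Deleting a vertex from a clique class leaves its contribution to a maximal stable set unchanged (it is always a single weight), but deleting a vertex from a stable class $S_0$ changes its all-or-nothing contribution $w(S_0)$ and therefore shifts the target of every maximal stable set containing $S_0$ while leaving the others fixed; moreover a naive size cap is unsafe, as the complete bipartite examples $K_{a,b}$ (two stable classes that see each other, equistable essentially when $\max(a,b)/\min(a,b)\le k$) show that the relative sizes of stable classes, not just their absolute sizes, control equistability. The computation above confines the remaining difficulty to two coupled requirements: re-realizing, after shrinking, a total weight that keeps all maximal stable sets at a common value (the forward condition), and choosing the internal weights so that no \emph{non-stable} set reaches weight $t$ (the only channel through which the internal distribution of a class still matters). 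I would therefore work with the global system of per-class \emph{block weights}: each stable class $S_0$ can realize any block weight in $[\,|S_0|,k|S_0|\,]$ and each clique class any single weight in $[k]$, and the forward condition becomes a finite system of equalities, one per pair of maximal independent sets of the quotient graph $\mathcal{Q}(G)$ (of which there are at most $2^{k}$). The goal is to rescale a feasible assignment to one in which every stable class uses only $O(k^2)$ vertices; proving that such a rescaling preserves feasibility while, by a suitable choice of internal weights, never creating a non-stable set of weight $t$, is where I expect essentially all of the technical effort to lie.

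Having bounded every stable class by $O(k^2)$ and, consequently, every clique class by $O(k^4)$ via the $r$-Clique Reduction with $r=O(k^4)$, the resulting kernel has at most $k\cdot O(k^4)=O(k^5)$ vertices and is computed in linear time. Finally I would feed this kernel to Theorem~\ref{thm:refined-XP}: on $n'=O(k^5)$ vertices its running time is $O\!\big(\max\{{n'}^{2k}k^{1-k},k^{3k+1}\}\big)=O\!\big((k^5)^{2k}k^{1-k}\big)=O(k^{9k+1})$, which together with the linear-time kernelization gives the claimed bound $O(k^{9k+1}+m+n)$.
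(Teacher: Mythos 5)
Your proposal is not a proof: the step you yourself identify as carrying ``essentially all of the technical effort'' --- shrinking the stable set classes while preserving $k$-equistability --- is exactly the heart of the theorem, and you leave it entirely open. The surrounding material you do prove (rejecting when $\pi(G)>k$, the all-or-nothing behaviour of stable classes, the safety of the $r$-Clique Reduction once the target is bounded, and the final runtime computation $\bigl(k^5\bigr)^{2k}k^{1-k}=k^{9k+1}$) matches the paper, but the proposed route through the missing step --- rescaling ``block weights'' subject to a system of $2^k$ equalities so that every stable class has $O(k^2)$ vertices --- is both unexecuted and aimed at a stronger invariant than the paper ever establishes; the paper's kernel in fact retains a stable class of size $k^4$, and nothing in your sketch suggests $O(k^2)$ per class is attainable by a safe reduction. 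Your second ``consequence'' only rules out \emph{stable} sets of weight $t$ that cut a class; the genuine danger in the backward direction is arbitrary non-stable sets of weight $t$, which you acknowledge but never address, and which is precisely where the paper's contradiction arguments do their work.

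The paper closes this gap with three ideas absent from your proposal. First, a swap argument (Claim~\ref{clm:big-classes}): if two distinct twin classes $X,Y$, one of them stable, both have size at least $k(k+1)$, then by pigeonhole one finds $j$ vertices of some weight $i$ in $X$ and $i$ vertices of some weight $j$ in $Y$ avoiding a suitable maximal stable set $S\supseteq X$, and exchanging them yields a non-stable set of weight $w(S)$ --- so at most \emph{one} twin class can be large, killing your worry about coupled relative sizes of several stable classes at once. This splits the analysis into the easy case (all big classes are cliques, whence the target is at most $k^3(k+1)$ and clique reduction finishes) and the case of a unique big stable class $X$ with $|V(G)\setminus X|\le k^3$. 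Second, when $X$ is isolated in the quotient graph and $|X|\ge k^5$, the paper proves (Claim~\ref{claim}) that one may assume the equistable function is \emph{constant} on $X$, by showing that any set of weight $t'$ under the modified function must miss at most $k^4$ vertices of $X$ and can then be retracted to a weight-$t$ set under the original function; this normalization is what makes truncating $X$ to $k^4$ vertices safe (Claim~\ref{clm:G-vs-G-prime}). Third, when $X$ is non-isolated, a one-line weight comparison between a maximal stable set avoiding $X$ (of size at most $k^3$) and one containing $X$ either refutes $k$-equistability or bounds $|V(G)|$ by $(k+1)k^3$ outright --- no reduction of $X$ is needed at all. Without arguments of this kind, your kernelization remains a plan rather than a proof.
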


\begin{proof}
Let us first prove that the second statement follows from the first one. Assume that we can compute an $O(k^5)$-vertex kernel for the {\sc $k$-Equistability} problem in linear time. By Theorem~\ref{thm:refined-XP}, we can then decide whether this kernel is $k$-equistable in time $O(k^{9k+1})$.

We now turn to the construction of the $O(k^5)$-vertex kernel.
In case of a no-instance, our algorithm simply returns a non-equistable graph, say the $4$-vertex path $P_4$.
In what follows, we will assume that the input graph $G$ satisfies $\pi(G)\le k$, since otherwise $G$ is not $k$-equistable, by Lemma~\ref{lem:same-weight}.
The following claim is the main step of our kernelization.

\begin{Claim}\label{clm:big-classes}
If there exist
two distinct twin classes $X$ and $Y$ such that
one of them is a stable set and $\min\{|X|,|Y|\}\ge k(k+1)$,
then $G$ is not $k$-equistable.
\end{Claim}

\begin{proof}
Suppose for a contradiction that $G$ is $k$-equistable, with an equistable weight function $w:V(G)\to [k]$, and that there exist
two distinct twin classes $X$ and $Y$ with $\min\{|X|,|Y|\}\ge k(k+1)$ such that $X$ is a stable set.
If the set $X\cup Y$ is contained in every maximal stable set of $G$, then $X\cup Y$ forms a twin class, a contradiction.
Thus, we may assume without loss of generality that there exists a maximal stable set $S$ of $G$
such that $X\subseteq S$ and $Y\nsubseteq S$.

Recall that $Y$ is either a clique class or a stable set class.
Since every clique intersects every stable set in at most one vertex
and every stable set class is either entirely contained in $S$ or disjoint from it,
the fact that $Y\nsubseteq S$ implies $|Y\cap S|\le 1$.
Let $i,j\in [k]$ be weights such that $|\{x\in X:w(x) = i\}|\ge k+1$, and
$|\{y\in Y:w(y) = j\}|\ge k+1$. Since $j\le k$, there exists a set $X'$ of $j$ vertices in $X$ of weight $i$.
Since $i\le k$ and $|Y\cap S|\le 1$, there exists a set $Y'$ of $i$ vertices in $Y$ of weight $j$ such that
$Y'\cap S=\emptyset$. Then, the set $S' = (S \setminus X')\cup Y'$ is not a stable set, since
otherwise by Observation~\ref{j34fskf} the set $S\cup Y$ would be a stable set properly containing $S$, contrary to the  maximality of $S$.
Note that $w(S') = w(S)$, contradicting the assumption that $w$ is an equistable weight function of $G$.
\qed \end{proof}

We consider the following two cases.\\

\noindent
\textbf{Case 1.} \emph{Every twin class $X$ with $|X|\ge k(k+1)$ is a clique class.}\\

In this case, every stable set class has less than $k(k+1)$ vertices, which implies that every maximal stable set of $G$ contains at most
$k(k+1)$ vertices from each twin class and is thus of total size at most $k^2(k+1)$.
This implies that in every $k$-equistable structure $(w,t)$ of $G$, we have $t\le k^3(k+1)$.

We now perform $r$-Clique Reduction rule from Section~\ref{sec:t-equi} with $r:=k^3(k+1)$.
By Lemma~\ref{lem:clique-reduction} applied with $T=[r]$ and $k$, the application of $r$-Clique Reduction rule is safe.
When the rule can no more be applied, we have a graph $G'$ with at most $k$ twin classes, each of size at most
$k^3(k+1)$.
We are done since $|V(G')| = O(k^5)$.\\
	
\noindent
\textbf{Case 2.} \emph{There exists a stable set twin class $X$ with $|X|\ge k(k+1)$.}\\

\noindent
By Claim~\ref{clm:big-classes}, we may assume that $X$ is the unique twin class of size at least $k(k+1)$
(since otherwise $G$ is not $k$-equistable).

Note that $V(G)\setminus X$ contains at most $k-1$ twin classes, each containing less than $k(k+1)$ vertices, 
hence $|V(G)\setminus X|\le (k-1)k(k+1) \le k^3$.

Suppose first that $X$ corresponds to an isolated vertex in the quotient graph $Q(G)$.
If $|X|<k^5$, then $|V(G)|<k^5+k^3 = O(k^5)$ and we are done.

So suppose that $|X|\ge k^5$.

\begin{Claim}\label{claim}
$G$ is $k$-equistable if and only if it admits a $k$-equistable function that is constant on $X$.
\end{Claim}

\begin{proof}
The if part being trivial, assume that $G$ is $k$-equistable, and let $(w,t)$ be a $k$-equistable structure of $G$.
Let $i \in \{1,\ldots,k\}$ be such that $|X^w_i | \ge k^4$, where $X^w_i = \{v \in X : w(v) = i\}$.
Now we define a weight function $w'$ that equals $w$ outside $X$, and is constantly $i$ on $X$.
We claim that $w'$ is a $k$-equistable function of $G$. Clearly, $w'$ is bounded by $k$.
Under $w'$, all maximal stable sets of $G$ have weight $t' := t - w(X) + w'(X)$.

The only possible problem is that $w'(S) = t'$ for some vertex set $S$ that is not a maximal stable set of $G$.
In this case, we claim that $r := |X \setminus S| \le k^4$. To see this, suppose $r > k^4$. 
Since  $|S\setminus X|\le |V(G)\setminus X|\le k^3$, we get $w'(S\setminus X)\le k^4$.
Therefore 
$w'(S) =w'(X)-w'(X\setminus S) + w'(S\setminus X)\le i(|X|-r)+ k^4$.
But $k^4 < ir$, since $i \ge 1$ and $r > k^4$. Thus
$i(|X|-r) + k^4 <  i(|X|-r) + ir = i|X| \le t'$, a contradiction.

So, $r \le k^4$, and since $k^4\le |X^w_i|$ and $w'$ is constant on $X$ we may assume that
$X \setminus S \subseteq X^w_i$. But this yields
\begin{eqnarray*}
% \nonumber to remove numbering (before each equation)
w(S) &=& w'(S) - w'(X \cap S) + w(X \cap S)\\
&=& t' - i(|X|-r) + w(X \cap S)\\
&=& t' - i(|X|-r) + w(X \cap S) - ir + ir\\
&=& t' - i|X| + (w(X \cap S) + ir)\\
&=& t' - w'(X) + (w(X \cap S) + w(X \setminus S))\\
&=& t' - w'(X) + w(X) = t.
\end{eqnarray*}
A contradiction.
\qed \end{proof}

According to Claim~\ref{claim}, it suffices to test if $G$ is $k$-equistable by
considering all possible functions $w:V(G)\to [k]$ that are constant on $X$, and test for each of them whether it is a $k$-equistable function.

Before that, we reduce size of $X$.
For this, we compute a graph $G'$ from $G$ by deleting all but $k^4$ many vertices from $X$.
Note that, since $X$ is a twin class, $G'$ is unique up to isomorphism.

\begin{Claim}\label{clm:G-vs-G-prime}
$G$ is $k$-equistable if and only if $G'$ is $k$-equistable.
\end{Claim}

\begin{proof}
Let $X' := X \cap V(G')$ and $Y' := V(G')\setminus X'$.

First we assume that $G$ is $k$-equistable, say with an equistable structure $(w,t)$.
By Claim~\ref{claim}, we may assume that $w$ is constant on $X$, say $w|_X \equiv i$.
We now consider the weight function $w' := w|_{V(G')}$ with target value $t':=t-i|X \setminus X'|$, and claim that $(w',t')$ is a $k$-equistable structure of $G'$.
Since every maximal stable set of $G$ (resp.,~$G'$) contains $X$ (resp.,~$X'$) as a subset,
it is straightforward that every maximal stable set of $G'$ has weight $t'$.
Suppose that there is a set $S \subseteq V(G')$ with $w(S)=t'$ that is not a maximal stable set of $G'$.
Then the set $S \cup (X\setminus X')$ has total weight $t$, but is not a maximal stable set of $G$, a contradiction.
This proves that $G'$ is $k$-equistable.

Now we assume that $G'$ is $k$-equistable, say with an equistable structure $(w',t')$.
By Claim~\ref{claim} applied to $G'$, we may assume that $w'$ is constant on $X'$, say $w'|_{X'} \equiv i$.
Consider the weight function $w:V(G)\to[k]$ defined as $w(x) = w'(x)$ for all $x\in V(G')$ and $w(x) = i$ for all $x\in X\setminus X'$ with target
value $t:=t'+i|X \setminus X'|$. We claim that $(w,t)$ is a $k$-equistable structure of $G$.
Again it is straightforward that any maximal stable set of $G$ has weight $t$.
Suppose that there is a set $S \subseteq V(G)$ with $w(S)=t$ that is not a maximal stable set of $G$.

Recall that $|Y'| \le (k-1)k(k+1) \le k^3$ and consequently $w(Y')\le k^4$.
If $|X \setminus S| > k^4$, we thus obtain
\begin{eqnarray*}
% \nonumber to remove numbering (before each equation)
  w(S) &\le& w(Y') + i|X|-i(k^4+1)  \\
  &\le& k^4 + i |X| - (k^4+1)\\
  &=& i|X|-1\\
  &<& w(X) \le t,
\end{eqnarray*}
a contradiction.
Thus, $|X \setminus S| \le k^4$, and so we may assume that $X \setminus S \subseteq X'$.
Let $S' := S \cap V(G')$.
Then $w'(S') = w(S) - i|X \setminus X'| = t'$, but $S'$ is not a maximal stable set of $G'$.
This is contradictory, and so $G$ is $k$-equistable.
\qed \end{proof}

By Claim~\ref{clm:G-vs-G-prime}, it suffices to check whether $G'$ is $k$-equistable.
Since $|V(G')|\le k^4+k^3 = O(k^4)$, we are done.

\bigskip
\begin{sloppypar}
Now, suppose that $X$ corresponds to a non-isolated vertex in the quotient graph $Q(G)$.
Then, there exists a twin class $Y$ that sees $X$.
Let $S$ be a maximal stable set of $G$ containing a vertex of $Y$.
Then, $S\subseteq V(G)\setminus X$. Since
\hbox{$|V(G)\setminus X|\le k^3$}, we have in particular
that $|S|\le k^3$.
\end{sloppypar}

If $|X|>k|S|$, then
for every $k$-equistable function $w$ of $G$ and
every maximal stable set, say $S'$, such that $X\subseteq S'$, we have
$w(S')\ge |X|>k|S|\ge w(S)$, hence $G$ is not $k$-equistable.

If $|X|\le k|S|$, then $|V(G)|\le (k+1)k^3 = O(k^4)$.

Since it is clear that the above algorithm runs in time $O(n+m)$, the proof is complete.
\qed \end{proof}

%TODO: mention that in case of yes instances, weights can be computed. State the precise result in a theorem.

\section{Future work}

Several open problems surrounding our work remain, some of which we want to mention here in order to stimulate
research on this topic.

Firstly, we believe it is NP-hard to determine, given a graph $G$ and an integer $k$, whether $G$ is $k$-equistable.
It would be satisfying to see this proven, especially for the purpose of this paper. As mentioned in the introduction, the smallest such $k$ (if existing) might have to be exponential in the number of vertices of $G$~\cite{MR2823204}, which might serve as a hint for the hardness of this problem.

The analogous question is open also for the problem of {\sc Target-$t$ Equistability}: what is the computational complexity of
determining, given a graph $G$ and an integer $t$, whether $G$ is target-$t$ equistable? Again, the smallest such $t$ (if existing) might have to be exponential in the number of vertices of the input graph~\cite{MR2823204}.

A different computational problem in this context would be the following: given a graph $G$ and a number $k$, does it admit an equistable weight function using at most $k$ different weights? Here, both the parameterized and classical complexity are unknown.
Although we did not study this problem in depth, our impression is that it should be NP-hard, but FPT when parameterized by $k$.
In view of the results of the present paper, there might very well be a polynomial kernel for this problem.
Another problem that seems similar at first sight is whether equistability is FPT when parameterized by $\pi(G)$, the number of twin-classes of $G$.

Apart from these recognition problems, it is apparently open whether the maximum stable set problem
is FPT in the class of equistable graphs. Here we do at least know that this problem is APX-hard in this class~\cite{MR2823204}.

\bibliographystyle{abbrv}
\bibliography{bibliography}

\begin{thebibliography}{10}

\bibitem{MR0479384}
V.~Chv{\'a}tal and P.~L. Hammer.
\newblock Aggregation of inequalities in integer programming.
\newblock In {\em Studies in integer programming ({P}roc. {W}orkshop, {B}onn,
  1975)}, pages 145--162. Ann. of Discrete Math., Vol. 1. North-Holland,
  Amsterdam, 1977.

\bibitem{DBLP:conf/caap/CournierH94}
A.~Cournier and M.~Habib.
\newblock A new linear algorithm for modular decomposition.
\newblock In {\em CAAP}, pages 68--84, 1994.

\bibitem{DF13}
R.~Downey and M.~Fellows.
\newblock {\em Fundamentals of Parameterized Complexity}.
\newblock Springer-Verlag, 2013.

\bibitem{MR2080087}
T.~Kloks, C.-M. Lee, J.~Liu, and H.~M{\"u}ller.
\newblock On the recognition of general partition graphs.
\newblock In {\em Graph-theoretic concepts in computer science}, volume 2880 of
  {\em Lecture Notes in Comput. Sci.}, pages 273--283. Springer, Berlin, 2003.

\bibitem{MR2024271}
E.~Korach and U.~N. Peled.
\newblock Equistable series-parallel graphs.
\newblock {\em Discrete Appl. Math.}, 132(1-3):149--162, 2003.
\newblock Stability in graphs and related topics.

\bibitem{MR2379078}
E.~Korach, U.~N. Peled, and U.~Rotics.
\newblock Equistable distance-hereditary graphs.
\newblock {\em Discrete Appl. Math.}, 156(4):462--477, 2008.

\bibitem{MR3162288}
V.~E. Levit and M.~Milani{\v{c}}.
\newblock Equistable simplicial, very well-covered, and line graphs.
\newblock {\em Discrete Appl. Math.}, 165:205--212, 2014.

\bibitem{MR3040147}
V.~E. Levit, M.~Milani{\v{c}}, and D.~Tankus.
\newblock On the recognition of {$k$}-equistable graphs.
\newblock In {\em Graph-theoretic concepts in computer science}, volume 7551 of
  {\em Lecture Notes in Comput. Sci.}, pages 286--296. Springer, Heidelberg,
  2012.

\bibitem{MR1417258}
N.~V.~R. Mahadev and U.~N. Peled.
\newblock {\em Threshold graphs and related topics}, volume~56 of {\em Annals
  of Discrete Mathematics}.
\newblock North-Holland Publishing Co., Amsterdam, 1995.

\bibitem{MR1268776}
N.~V.~R. Mahadev, U.~N. Peled, and F.~Sun.
\newblock Equistable graphs.
\newblock {\em J. Graph Theory}, 18(3):281--299, 1994.

\bibitem{McAvaney1993131}
K.~McAvaney, J.~Robertson, and D.~DeTemple.
\newblock A characterization and hereditary properties for partition graphs.
\newblock {\em Discrete Mathematics}, 113(1–3):131--142, 1993.

\bibitem{MR1687819}
R.~M. McConnell and J.~P. Spinrad.
\newblock Modular decomposition and transitive orientation.
\newblock {\em Discrete Math.}, 201(1-3):189--241, 1999.

\bibitem{MR2794315}
{\v{S}}.~Miklavi{\v{c}} and M.~Milani{\v{c}}.
\newblock Equistable graphs, general partition graphs, triangle graphs, and
  graph products.
\newblock {\em Discrete Appl. Math.}, 159(11):1148--1159, 2011.

\bibitem{MR2823204}
M.~Milani{\v{c}}, J.~Orlin, and G.~Rudolf.
\newblock Complexity results for equistable graphs and related classes.
\newblock {\em Ann. Oper. Res.}, 188:359--370, 2011.

\bibitem{MR}
M.~Milani\v{c} and G.~Rudolf.
\newblock Structural results for equistable graphs and related classes.
\newblock {\em RUTCOR Research Report 25-2009}.

\bibitem{MT}
M.~Milani\v{c} and N.~Trotignon.
\newblock Equistarable graphs and counterexamples to three conjectures on
  equistable graphs.
\newblock {\em arXiv:1407.1670 [math.CO]}, 2014.

\bibitem{MR553649}
C.~Payan.
\newblock A class of threshold and domishold graphs: equistable and
  equidominating graphs.
\newblock {\em Discrete Math.}, 29(1):47--52, 1980.

\bibitem{MR2024275}
U.~N. Peled and U.~Rotics.
\newblock Equistable chordal graphs.
\newblock {\em Discrete Appl. Math.}, 132(1-3):203--210, 2003.
\newblock Stability in graphs and related topics.

\bibitem{MR2500307}
M.~Tedder, D.~Corneil, M.~Habib, and C.~Paul.
\newblock Simpler linear-time modular decomposition via recursive factorizing
  permutations.
\newblock In {\em Automata, languages and programming. {P}art {I}}, volume 5125
  of {\em Lecture Notes in Comput. Sci.}, pages 634--645. Springer, Berlin,
  2008.

\end{thebibliography}
\end{document}